\providecommand{\tabularnewline}{\\}
\DeclareMathOperator{\cra}{\mathrm{-cl-RA}}
\DeclareMathOperator{\Cra}{\mathrm{cl-RA}}
\begin{document}
\title{Two Results on Discontinuous Input Processing}
\titlerunning{Complexity of Road Coloring with Prescribed Reset Words}

\author{Vojt\v{e}ch Vorel\inst{1}\thanks{Research supported by the Czech Science Foundation grant GA14-10799S and the GAUK grant No. 52215.}}

\authorrunning{Vojt\v{e}ch Vorel}
\tocauthor{Vojt\v{e}ch Vorel}

\institute{Faculty of Mathematics and Physics, Charles University, Malostransk\'{e} n\'{a}m. 25, Prague, Czech Republic,\\ \email{vorel@ktiml.mff.cuni.cz}}

\maketitle  
\begin{abstract}
First, we show that universality and other properties of general jumping finite automata are undecidable, which answers a question asked by Meduna and Zemek in 2012.
Second, we close the study raised by \v{C}erno and Mr\'{a}z in 2010 by proving that clearing restarting automata using contexts of size two can accept binary non-context-free languages.
\end{abstract}
\renewcommand{\arraystretch}{1.6}

\section{Introduction}

In 2012, Meduna and Zemek \cite{athMED1,athMED1book} introduced \emph{general
jumping finite automata} as a model of discontinuous information processing.
A general jumping finite automaton (GJFA) is described by a finite
set $Q$ of states, a finite alphabet $\Sigma$, a finite set $R$
of \emph{rules }from $Q\times\Sigma^{\star}\times Q$, an initial
state $q_{0}\in Q$, and a set $F\subseteq Q$ of final states. In
a step of computation, the automaton switches from a state $r$ to
a state $s$ using a rule $\left(r,v,s\right)\in R$, and deletes
a factor equal to $v$ from any part of the input word. The choices
of the rule used and of the factor deleted are made nondeterministically
(in other words, the read head can jump to any position). A word is
accepted if there is a computation resulting in the empty word. The
boldface term $\mathbf{GJFA}$ refers to the class of languages accepted
by GJFA. The initial work \cite{athMED1,athMED1book} deals mainly
with closure properties of \textbf{$\mathbf{GJFA}$} and its relations
to classical language classes (the publications contain flaws, see
\cite{VO6}). It turns out that the class $\mathbf{GJFA}$ does not
have Boolean closure properties (complementation, intersection) nor
closure properties related to continuous processing (concatenation,
Kleene star, homomorphism, inverse homomorphism, shuffle). Accordingly,
the class also does not stick to classical complexity measures - it
is incomparable with both regular and context-free languages. It is
a proper subclass of both context-sensitive languages and of the class
NP, while there exist NP-complete \textbf{$\mathbf{GJFA}$ }languages.
See \cite{athFER1}, which is an extended version of \cite{athFER1conf}.

On the other hand, the concept of \emph{restarting automata }\cite{reaJAN2,reaJAN1}
is motivated by\emph{ }reduction analysis and grammar checking of
natural language sentences. In 2010, \v{C}erno and Mr\'{a}z \cite{reaCER2}
introduced a subclass named \emph{clearing restarting automata }($\Cra$)
in order to describe systems that use only very basic types of reduction
rules (see also \cite{reaCER3}). Unlike GJFA, clearing restarting
automata may delete factors according to contexts and endmarks, but
they are not controlled by state transitions. A key property of each
$\Cra$ is the maximum length of context used. For $k\ge1$, a \emph{$k$-clearing
restarting automaton} ($k\cra$) is described by a finite alphabet
$\Sigma$ and a finite set $I$ of instructions of the form $\left(u_{\mathrm{L}},v,u_{\mathrm{R}}\right)$,
where $v\in\Sigma^{*}$, $u_{\mathrm{L}}\in\Sigma^{k}\cup\cent\Sigma^{k-1}$,
and $u_{\mathrm{R}}\in\Sigma^{k}\cup\Sigma^{k-1}\$$. The words $u_{\mathrm{L}},u_{\mathrm{R}}$
specify left and right context for consuming a factor $v$, while
$\cent$ and $\$$ stand for the left and right end of input, respectively.

\section{Preliminaries}

We heavily use the notion of \emph{insertion}, as it was described,
e.g., in \cite{reaEHR2,reaHAU1,athKAR2}:
\begin{definition}
Let $K,L\subseteq\Sigma^{\star}$ be languages. The \emph{insertion
}of $K$ to $L$ is 
\[
L\leftarrow K=\left\{ u_{1}vu_{2}\mid u_{1}u_{2}\in L,v\in K\right\} .
\]
More generally, for each $k\ge1$ we denote
\begin{eqnarray*}
L\leftarrow^{k}K & = & \left(L\leftarrow^{k-1}K\right)\leftarrow K,\\
L\leftarrow^{\star}K & = & \bigcup_{i\ge0}L\leftarrow^{i}K,
\end{eqnarray*}
where $L\leftarrow^{0}K$ stands for $L$. In expressions with $\leftarrow$
and $\leftarrow^{\star}$, a singleton set $\left\{ w\right\} $ may
be replaced by $w$. 

A chain $L_{1}\leftarrow L_{2}\leftarrow\dots\leftarrow L_{d}$ of
insertions is evaluated from the left, e.g. $L_{1}\leftarrow L_{2}\leftarrow L_{3}$
means $\left(L_{1}\leftarrow L_{2}\right)\leftarrow L_{3}$. Finally,
$L\subseteq\Sigma^{\star}$ is a\emph{ unitary language} if $L=w\leftarrow^{\star}K$
for $w\in\Sigma^{\star}$ and finite $K\subseteq\Sigma$.
\end{definition}
As described above, a GJFA is a quintuple $M=\left(Q,\Sigma,R,q_{0},F\right)$.
The original definition of the accepted language $L\!\left(M\right)$
is based on \emph{configurations }that specify a position of the read
head (i.e., starting positions of the factor to be erased in the next
step). For our proofs, this type of configurations is useless, whence
we save space by directly using the following generative characterization
\cite[Corollary 1]{VO6} of $L\!\left(M\right)$ as a definition:
\begin{definition}
\label{lem: paths vs accept}Let $M=\left(Q,\Sigma,R,s,F\right)$
be a GJFA and $w\in\Sigma^{*}$. Then $w\in\mathrm{L}\!\left(M\right)$
if and only if $w=\epsilon$ and $s\in F$, or 
\begin{equation}
w\in\epsilon\leftarrow v_{d}\leftarrow v_{d-1}\leftarrow\cdots\leftarrow v_{2}\leftarrow v_{1},\label{eq:comp}
\end{equation}
where $d\ge1$ and $v_{1},v_{2},\dots,v_{d}$ is a labeling of an
accepting path in $M$.
\end{definition}

\begin{definition}
For $k\ge0$, a \emph{$k$-context rewriting system }is a tuple $R=\left(\Sigma,\Gamma,I\right)$,
where $\Sigma$ is an input alphabet, $\Gamma\supseteq\Sigma$ is
a working alphabet not containing the special symbols $\cent$ and
$\$$, called \emph{sentinels}, and $I$ is a fi{}nite set of instructions
of the form
\[
\left(u_{\mathrm{L}},v\rightarrow t,u_{\mathrm{R}}\right),
\]
where where $u_{\mathrm{L}}$ is a \emph{left context}, $x\in\Gamma^{k}\cup\cent\Gamma^{k-1}$,
$y$ is a \emph{right context}, $y\in\Gamma^{k}\cup\Gamma^{k-1}\$$,
and $v\rightarrow t$ is a \emph{rule}, $z,t\in\Gamma^{\star}$. A
word $w=u_{1}vu_{2}$ can be rewritten into $u_{1}tu_{2}$ (denoted
as$u_{1}vu_{2}\rightarrow_{R}u_{1}tu_{2}$) if and only if there exists
an instruction $\left(u_{\mathrm{L}},v\rightarrow t,u_{\mathrm{R}}\right)\in I$
such that $u_{\mathrm{L}}$ is a suffi{}x of $\cent u_{1}$ and $u_{\mathrm{R}}$
is a prefi{}x of $u_{2}\$$. The symbol $\rightarrow_{R}^{\star}$
denotes the reflexive-transitive closure of $\rightarrow_{R}$.
\end{definition}

\begin{definition}
For $k\ge0$, a $k$\emph{-clearing restarting automaton} ($k\cra$)
is a system $M=(\Sigma,I)$, where $(\Sigma,\Sigma,I)$ is a $k$-context
rewriting system such that for each $\mathbf{i}=\left(u_{\mathrm{L}},v\rightarrow t,u_{\mathrm{R}}\right)\in I$
it holds that $v\in\Sigma^{+}$ and $t=\epsilon$. Since $t$ is always
the empty word, we use the notation $\mathbf{i}=\left(u_{\mathrm{L}},v,u_{\mathrm{R}}\right)$.
A $k\cra$ $M$ accepts the language 
\[
L\!\left(M\right)=\left\{ w\in\Sigma^{\star}\mid w\vdash_{M}^{\star}\epsilon\right\} ,
\]
where $\vdash_{M}$ denotes the rewriting relation $\rightarrow_{\overline{M}}$
of $\overline{M}=(\Sigma,\Sigma,I)$. The term $\mathcal{L}\!\left(k\cra\right)$
denotes the class of languages accepted by $k\cra$.
\end{definition}
Like in GJFA, one may consider the generative approach to languages
accepted by clearing restarting automata. In this case, the generative
approach is formalized by writing $w_{2}\dashv w_{1}$ instead of
$w_{1}\vdash w_{2}$.

\section{Undecidability in General Jumping Finite Automata\label{sec:Decidability-in-General}}

This section proves the following theorem, which solves an open problem
stated in \cite{athMED1book,athMED1}:
\begin{theorem}
Given a GJFA $M=\left(Q,\Sigma,R,s,F\right)$, it is undecidable whether
$L\!\left(M\right)=\Sigma^{*}$.\end{theorem}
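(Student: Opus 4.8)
The plan is to reduce from the Post Correspondence Problem (PCP), exploiting the generative characterization of $L(M)$ in Definition~\ref{lem: paths vs accept}: read backwards, a word $w$ lies in $L(M)$ exactly when one can repeatedly delete contiguous factors from $w$, the sequence of deleted factors spelling an accepting path of $M$, until the empty word is reached. Given a PCP instance with pairs $\left(u_{1},v_{1}\right),\dots,\left(u_{n},v_{n}\right)$ over an alphabet $A$, I would fix an encoding alphabet $\Sigma$ and a regular template for words purporting to describe a candidate solution, say a block of indices followed by the concatenation $u_{i_{1}}\cdots u_{i_{k}}$ and a barred, reversed copy of $v_{i_{1}}\cdots v_{i_{k}}$. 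The GJFA $M$ is then built so that it accepts $w$ if and only if $w$ fails to encode a genuine solution. Consequently $L(M)=\Sigma^{*}$ holds precisely when the instance has no solution, and undecidability of universality follows from that of PCP.

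The engine driving the construction is the observation that contiguous factor deletion performs Dyck-style matching: iterating the deletion of a factor of the form $a\bar{a}$ cancels a string against a reversed, barred copy of itself, emptying the juxtaposition exactly when the two strings coincide. Using several such bracket types, together with the sequencing supplied by the states and transition labels (the accepting path prescribes in which order factors may be removed), $M$ can nondeterministically guess which kind of defect a given $w$ exhibits --- a violation of the template, an inconsistency between an index and the factor it should contribute, or a genuine mismatch between the $u$-side and the $v$-side at some position --- and then certify that defect by deleting factors down to $\epsilon$. Each individual defect check is a finite gadget of rules, and the overall $M$ is the union of these gadgets over the branching of its paths.

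The hard part will be \emph{soundness together with completeness of the error detector}: I must guarantee that every ill-formed or non-solution word can be fully deleted along some accepting path, while no accepting path can ever empty the encoding of a true solution. The tension is sharp because GJFA may delete at arbitrary positions, so an adversary holding a valid solution has great freedom to attempt an unintended cancellation; yet one cannot simply add rules deleting arbitrary single symbols, since one unrestricted symbol-deletion would let every word collapse and make $L(M)=\Sigma^{*}$ trivially. Thus the state control must be tuned so that the matching gadgets can only fire in the order that actually witnesses a defect, and I would spend the bulk of the proof verifying this: first that the template- and index-consistency failures are detectable, then that a positional mismatch forces a residual symbol that no admissible deletion sequence can remove, and finally that a correct solution encoding leaves a nonempty irreducible core. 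Establishing this last tightness claim --- that correct encodings survive every accepting path --- is the principal obstacle.
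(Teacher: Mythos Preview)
Your plan is a genuine alternative to the paper's argument. The paper does \emph{not} reduce from PCP but from the universality problem for context-free grammars: given $G$ in Greibach normal form, it builds a five-state GJFA $M_G$ over an enlarged alphabet $\Gamma=\Sigma_{\mathrm T}\cup\Sigma_{\mathrm N}\cup\Sigma_{\mathrm B}$ so that $L(G)=\Sigma_{\mathrm T}^{*}$ iff $L(M_G)=\Gamma^{*}$. The point of that choice is the structural match between GJFA and CFGs: in Greibach form every derivation step replaces $A_i$ by some $u_i$, and by keeping a trace $A_ib_i$ the paper turns a derivation into a sequence of \emph{insertions} of the finite words $b_iu_i$ and $A_ib_i$ --- exactly what Definition~\ref{lem: paths vs accept} describes. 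The ``template'' side is then handled by a single clean trick: any word containing a two-letter factor outside a small set $P_{\mathrm C}$ is accepted by first deleting that bad factor and then sweeping up letters one by one; words without such a factor are forced to be factors of $(\Sigma_{\mathrm T}t)^{*}$, where $t$ interleaves all $A_ib_i$, and on those the insertion semantics faithfully simulates $G$.

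Your PCP route is plausible in outline, but as it stands it is a sketch rather than a proof, and the missing piece is precisely the one you flag as the principal obstacle. Two concrete difficulties deserve emphasis. First, the \emph{index-consistency} check --- that the purported $u$-block really equals $u_{i_1}\cdots u_{i_k}$ for the written indices --- requires aligning each index with a specific positional segment, yet GJFA state control fixes only the \emph{sequence} of deleted factors, never their positions; you have not said how the gadget prevents an index from being ``matched'' against the wrong segment. Second, your soundness paragraph appears to have the polarity reversed at one point: a positional mismatch between the $u$- and $v$-sides is exactly the case in which $M$ \emph{must} accept, so it should enable a reduction to $\epsilon$, not force a residual. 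What must leave a residual under every accepting path is a \emph{correct} solution encoding. Until these two items are pinned down with an explicit rule set and a case analysis comparable to the paper's, the proposal remains a programme; the CFG reduction buys you a construction in which both the template complement and the ``good words simulate $G$'' directions fall out of a single two-letter-factor invariant, which is why the paper's proof is short.
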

\begin{proof}
Given a context-free grammar $G$ with terminal alphabet $\Sigma_{\mathrm{T}}$,
it is undecidable whether $\mathrm{L}\!\left(G\right)=\Sigma_{\mathrm{T}}^{*}$
\cite{HU1}. We present a reduction from this problem to the universality
of GJFA. Assume that the given grammar $G$:
\begin{itemize}
\item has non-terminal alphabet $\Sigma_{\mathrm{N}}=\left\{ A_{1},\dots,A_{m}\right\} $
with a start symbol $A_{\mathrm{S}}\in\Sigma_{\mathrm{N}}$,
\item does not accept the empty word, and
\item is given in Greibach normal form \cite{HU1} as
\[
B_{i}\rightarrow u_{i},
\]
where $B_{i}\in\Sigma_{\mathrm{N}}$ and $u_{i}\in\Sigma_{\mathrm{T}}\Sigma_{\mathrm{N}}^{*}$
for $i\in\left\{ 1,\dots,n\right\} $.
\end{itemize}

We construct a GJFA $M_{G}=\left(Q,\Gamma,R,s,F\right)$ as follows,
denoting $\Sigma_{\mathrm{B}}=\left\{ b_{1},\dots,b_{m}\right\} $:
\begin{eqnarray*}
Q & = & \left\{ q_{0},q_{1},q_{2},q_{3},q_{4}\right\} ,\\
\Gamma & = & \Sigma_{\mathrm{T}}\cup\Sigma_{\mathrm{N}}\cup\Sigma_{\mathrm{B}},
\end{eqnarray*}
$s=q_{0}$, $F=\left\{ q_{4}\right\} $, and $R$ follows Figure \ref{fig:MG}.
\begin{figure}
\begin{centering}
\includegraphics{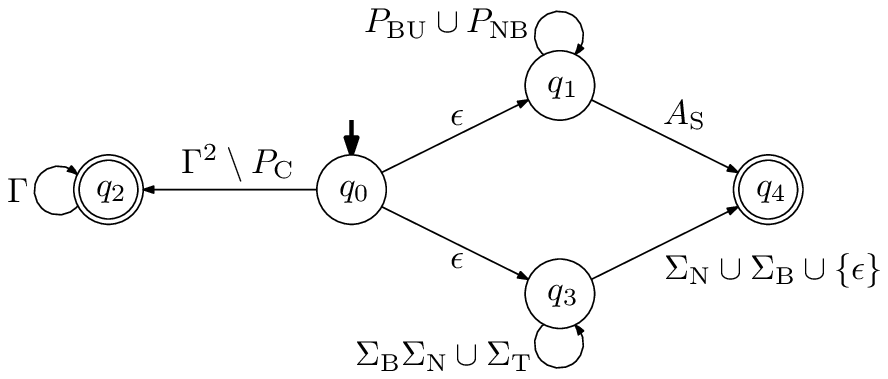}
\par\end{centering}

\caption{The GJFA $M_{G}$\label{fig:MG}}
\end{figure}
 Each arrow labeled with a finite set $S\subseteq\Gamma^{*}$ stands
for $\left|S\right|$ transitions, each labeled by a unique $v\in S$.
The following finite sets are used:
\begin{eqnarray*}
P_{\mathrm{BU}} & = & \left\{ b_{i}u_{i}\mid i=1,\dots,m\right\} ,\\
P_{\mathrm{NB}} & = & \left\{ A_{i}b_{i}\mid i=1,\dots,m\right\} ,\\
P_{\mathrm{C}} & = & \left\{ xA_{1}\mid x\in\Sigma_{\mathrm{T}}\right\} \cup\\
 & \cup & \left\{ A_{i}b_{i}\mid i=1,\dots,m\right\} \cup\\
 & \cup & \left\{ b_{i}A_{i+1}\mid i=1,\dots,m-1\right\} \cup\\
 & \cup & \left\{ b_{m}x\mid x\in\Sigma_{\mathrm{T}}\right\} .
\end{eqnarray*}
For a word $w\in\Gamma^{*}$ we denote with $w_{\mathrm{T}}$ and
$w_{\mathrm{N,B}}$ the projections of $w$ to subalphabets $\Sigma_{\mathrm{T}}$
and $\Sigma_{\mathbf{N}}\cup\Sigma_{\mathbf{B}}$ respectively.%
\footnote{A \emph{projection }to $\Sigma\subseteq\Gamma$ is given by the homomorphism
that maps $x\in\Gamma$ to $x$ if $x\in\Sigma$ or to $\epsilon$
otherwise.%
} Let us show that $L\!\left(G\right)=\Sigma_{\mathrm{T}}^{*}$ if
and only if $L\!\left(M_{G}\right)=\Gamma^{*}$.

First, suppose that $L\!\left(G\right)=\Sigma_{\mathrm{T}}^{*}$ and
take an arbitrary $w\in\Gamma^{*}$. Describe a derivation of $w_{\mathrm{T}}$
by $G$ using $v_{0},v_{1},\dots,v_{d}\in\Sigma_{\Gamma}$, $d\ge1$,
where 
\begin{eqnarray*}
v_{0} & = & A_{\mathrm{S}},\\
v_{d} & = & w_{\mathrm{T}},\\
v_{k} & = & v_{\mathbf{p},k}A_{i_{k}}v_{\mathbf{s},k},\\
v_{k+1} & = & v_{\mathbf{p},k}u_{i_{k}}v_{\mathbf{s},k}
\end{eqnarray*}
for each $k\in\left\{ 0,\dots,d-1\right\} $. For $k\in\left\{ 0,\dots,d\right\} $,
we define inductively a word $w_{k}\in\Sigma_{\Gamma}$ satisfying
$\left(w_{k}\right)_{\mathrm{T}}=v_{k}$ as follows. First, $w_{0}=A_{\mathrm{S}}$.
Next, take $0\le k\le d-1$ and write $w_{k}=w_{\mathbf{p},k}A_{i_{k}}w_{\mathbf{s},k}$
such that $\left(w_{\mathbf{p},k}\right)_{\mathrm{T}}=v_{\mathbf{p},k}$
and $\left(w_{\mathbf{s},k}\right)_{\mathrm{T}}=v_{\mathbf{s},k}$.
Then define 
\[
w_{k+1}=w_{\mathbf{p},k}A_{i_{k}}b_{i_{k}}u_{i_{k}}w_{\mathbf{s},k}.
\]
Informally, the words $w_{0},\dots,w_{d}$ describe the derivation
of $w_{\mathrm{T}}$ with keeping all the used nonterminals, i.e.,
$A_{i_{k}}$ is rewrited by $A_{i_{k}}b_{i_{k}}u_{i_{k}}$ instead
of $u_{i_{k}}$. Observe that $q_{1}w_{d}\curvearrowright^{*}q_{1}A_{\mathrm{S}}$
using the transitions labeled by words from $P_{\mathrm{BU}}$. Also
observe that, due to Greibach normal form, $w_{d}\in\left(\Sigma_{\mathrm{T}}\cup\Sigma_{\mathrm{T}}\Sigma_{\mathrm{N}}\Sigma_{\mathrm{B}}\right)^{*}$,
which means that the factors from $\Sigma_{\mathrm{N}}\Sigma_{\mathrm{B}}$
are always separated with letters from $\Sigma_{\mathrm{T}}$.

Distinguish the following cases:
\begin{itemize}
\item If $w$ does not have a factor from $\Gamma^{2}\backslash P_{\mathrm{C}}$,
all two-letter factors of $w$ belong to $P_{\mathrm{C}}$, which
implies that $w$ is a factor of some word from $\left(\Sigma_{\mathrm{T}}t\right)^{*}$,
where 
\begin{equation}
t=A_{1}b_{1}A_{2}b_{2}\cdots A_{m}b_{m}.\label{eq:tDef}
\end{equation}

\begin{itemize}
\item If $w$ begins with a letter from $\Sigma_{\mathrm{T}}\cup\Sigma_{\mathrm{N}}$,
and ends with a letter from $\Sigma_{\mathrm{T}}\cup\Sigma_{\mathrm{B}}$,
then $q_{1}w\curvearrowright^{*}q_{1}w_{d}$ using the transitions
labeled by words from $P_{\mathrm{NB}}$. Because $q_{1}w_{d}\curvearrowright^{*}q_{1}A_{\mathrm{S}}$,
we conclude $w\in L\!\left(M_{G}\right)$.
\item Otherwise, $w$ starts with a letter from $\Sigma_{\mathrm{B}}$ or
ends with a letter from $\Sigma_{\mathrm{N}}$. Then 
\[
w_{\mathrm{N,B}}\in\Sigma_{\mathrm{B}}\left(\Sigma_{\mathrm{N}}\Sigma_{\mathrm{B}}\right)^{*}\cup\left(\Sigma_{\mathrm{N}}\Sigma_{\mathrm{B}}\right)^{*}\Sigma_{\mathrm{N}}\cup\Sigma_{\mathrm{B}}\left(\Sigma_{\mathrm{N}}\Sigma_{\mathrm{B}}\right)^{*}\Sigma_{\mathrm{N}}
\]
and we observe that $q_{0}w\curvearrowright q_{3}w\curvearrowright^{*}q_{3}w_{\mathrm{N,b}}\curvearrowright u$
for some $u\in\Sigma_{\mathrm{N}}\cup\Sigma_{\mathrm{b}}\cup\left\{ \varepsilon\right\} $.
As $q_{3}u\curvearrowright q_{4}$, we get $w\in L\!\left(M_{G}\right)$.
\end{itemize}
\item If $w$ has a factor $u\subseteq\Gamma^{2}\backslash P_{\mathrm{C}}$,
write $w=w_{\mathbf{p}}uw_{\mathbf{s}}$ and observe 
\[
w_{\mathbf{p}}q_{0}uw_{\mathbf{s}}\curvearrowright q_{2}w_{\mathbf{p}}w_{\mathbf{s}}\curvearrowright^{*}q_{2},
\]
implying $w\in L\!\left(M_{G}\right)$. 
\end{itemize}

Second, suppose that $L\!\left(M_{G}\right)=\Gamma^{*}$ and take
an arbitrary $v=x_{1}x_{2}\cdots x_{n}\in\Sigma_{\mathrm{T}}^{*}$
with $x_{1},\dots,x_{n}\in\Sigma_{\mathrm{T}}$. Let $w=\left(x_{1}t\right)\left(x_{2}t\right)\cdots\left(x_{n-1}t\right)\left(x_{n}t\right)$,
with $t$ defined in (\ref{eq:tDef}). We have $w\in L\!\left(M_{G}\right)$.
Observe that:
\begin{itemize}
\item The word $w$ does not contain a factor from $\Gamma^{2}\backslash P_{\mathrm{C}}$.
\item By deleting factors from $\Sigma_{\mathrm{B}}\Sigma_{\mathrm{N}}\cup\Sigma_{\mathrm{T}}$,
the word $w$ cannot become a word from $\Sigma_{\mathrm{N}}\cup\Sigma_{\mathrm{B}}\cup\{\epsilon\}$. 
\end{itemize}

Thus, $w$ can be accepted by $M$ only using a path through the state
$q_{1}$ ending in the state $q_{4}$. In other words, $w$ can be
obtained by inserting words from $P_{\mathrm{BU}}\cup P_{\mathrm{NB}}$
to $A_{\mathrm{S}}$. During that process, once an ocurence of some
$b_{i}$ fails to be preceded by $A_{i}$, this situation lasts to
the very end, which is a contradiction. It follows that $b_{i}u_{i}\in P_{\mathrm{BU}}$
can be inserted only to the right of an occurence of $A_{i}$ that
was not followed by $b_{i}$ yet. This corresponds to rewriting $A_{i}$
with $u_{i}$ and we can observe that $w_{\mathrm{T}}=v$ is necessarily
generated by the grammar $G$. 

\end{proof}

\section{Clearing Restarting Automata with Small Contexts\label{sec:Clearing-Restarting-Automata}}

Though the basic model of clearing restarting automata is not able
to describe all context-free languages nor to handle basic language
operations (e.g. concatenation and union) \cite{reaCER2}, it has
been deeply studied in order to design suitable generalizations. The
study considered also restrictions of the maximum context length to
be used in rewriting rules:
\begin{theorem}
[\cite{reaCER2}]~
\begin{enumerate}
\item For each $k\ge3$, the class $\mathcal{L}\!\left(k\cra\right)$ contains
a binary language, which is not context-free. 
\item The class $\mathcal{L}\!\left(2\cra\right)$ contains a language $L\subseteq\Sigma^{\star}$
with $\left|\Sigma\right|=6$, which is not context-free. 
\item The class $\mathcal{L}\!\left(k\cra\right)$ contains only context-free
languages.
\end{enumerate}
\end{theorem}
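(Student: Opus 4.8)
All three items are most cleanly attacked through the \emph{generative} reading of a $k\cra$ mentioned after the last definition: reversing every clearing step, one has $w\in L(M)$ if and only if $\epsilon\dashv_{M}^{\star}w$, where an instruction $\left(u_{\mathrm{L}},v,u_{\mathrm{R}}\right)$ now \emph{inserts} the factor $v$ into a gap whose left surroundings end with $u_{\mathrm{L}}$ (reading the sentinel $\cent$ at the left boundary) and whose right surroundings begin with $u_{\mathrm{R}}$ (reading $\$$ at the right boundary). Thus $L(M)$ is exactly the language of a pure context-controlled insertion system, and I would treat item~3 as an upper bound on such systems and items~1--2 as explicit non-context-free witnesses.

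For item~3 (the case $1\cra$) I would compile the insertion system into a context-free grammar. Because each context has length one, whether $v=c_1\cdots c_\ell$ may be inserted into a gap depends \emph{only} on the single symbol immediately to its left and the single symbol immediately to its right. I therefore take one nonterminal $N_{x,y}$ for every pair $(x,y)\in(\Sigma\cup\{\cent\})\times(\Sigma\cup\{\$\})$, meant to generate every string that can legally fill a gap with left neighbour $x$ and right neighbour $y$, and I add the productions $N_{x,y}\to\epsilon$ together with, for each instruction $\left(u_{\mathrm{L}},v,u_{\mathrm{R}}\right)$ with $u_{\mathrm{L}}=x$ and $u_{\mathrm{R}}=y$,
\[
N_{x,y}\ \longrightarrow\ N_{x,c_1}\,c_1\,N_{c_1,c_2}\,c_2\cdots c_{\ell-1}\,N_{c_{\ell-1},c_\ell}\,c_\ell\,N_{c_\ell,y},
\]
with start symbol $S=N_{\cent,\$}$. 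The right-hand side lists precisely the sub-gaps a single insertion creates, and each sub-gap is labelled by the two symbols that will flank it at the moment of any future insertion; since a one-letter context can never reach past that immediate neighbour, a derivation of the grammar corresponds step for step to a valid insertion sequence. Establishing $L(G)=L(M)$ is then a routine double induction (on the number of insertions for $\subseteq$, on the derivation for $\supseteq$), and $L(M)\in\mathsf{CFL}$ follows.

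For the two lower bounds I would fix one explicit machine per alphabet budget, read off its generated language in closed form from the insertion characterisation, and refute context-freeness with the pumping lemma for context-free languages. The decisive point is that the grammar of item~3 has \emph{no} analogue once $k\ge2$: with a context of length two the admissibility of an insertion can hinge on a symbol that is not the immediate neighbour and that sits in a different, independently growing part of the word, so one can couple two separated counters and force them to stay equal---exactly the ingredient a context-free language cannot realise. A longer context carries more information for free, which is why the two symbols $a,b$ can double as their own position markers when $k\ge3$; with only $k=2$ one must instead spend extra letters to tag the relevant boundaries, and the construction in question does so with six symbols. In each case there are two inclusions to check: that every intended word is generated ($\supseteq$, a direct induction that produces the word one layer at a time) and that no others are ($\subseteq$).

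I expect the $\subseteq$ direction of items~1--2 to be the real obstacle: one must show that the insertion closure of the chosen finite rule set produces \emph{no} spurious words, i.e.\ that the purely local, context-bounded rules cannot accidentally assemble a string outside the target language. This is precisely where the placement of the marker symbols (and, for $k=2$, the choice of exactly six of them) earns its keep, and it must be balanced against keeping the witness genuinely non-context-free so that the pumping argument still bites. By contrast the only delicate part of item~3 is the sentinel bookkeeping in the equivalence proof, which is mechanical once the family $N_{x,y}$ is set up as above.
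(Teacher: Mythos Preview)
The theorem you are addressing is not proved in this paper at all: it is quoted, with attribution, from \v{C}erno and Mr\'{a}z as background, and the paper's own contribution is the subsequent Theorem~\ref{thm: Main on cl-RA}, which sharpens item~2 from a six-letter alphabet to a binary one. There is consequently no proof in the present paper against which to compare your proposal.

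On the substance of your sketch: your treatment of item~3 (which, in light of items~1 and~2, must be read as the case $k=1$) via a context-free grammar with nonterminals $N_{x,y}$ indexed by the flanking symbols is the standard and correct argument. For items~1 and~2, however, you have not given a proof but only a plan. You announce that you would ``fix one explicit machine per alphabet budget'' and then invoke the pumping lemma, yet you supply neither the machine, nor a closed-form description of its language, nor the invariant that rules out spurious words---and you yourself flag this $\subseteq$ direction as ``the real obstacle'' while leaving it open. The explicit constructions and the verification that the insertion closure produces exactly the intended set are the entire content of those two items; without them, what you have for items~1 and~2 is an outline of intent rather than a proof.
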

The present section is devoted to proving the following theorem, which
completes the results listed above.
\begin{theorem}
\label{thm: Main on cl-RA}The class $\mathcal{L}\!\left(2\cra\right)$
contains a binary language, which is not context-free.
\end{theorem}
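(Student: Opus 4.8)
The plan is to exhibit an explicit binary automaton $M=(\{0,1\},I)\in 2\cra$ whose language, while itself possibly complicated, meets a suitable regular set $R$ in a language that is provably not context-free; since the context-free languages are closed under intersection with regular languages, this forces $\mathrm{L}\!\left(M\right)$ to be non-context-free as well. The structure I would encode is a ``three equal counts'' phenomenon in the spirit of $\left\{a^{n}b^{n}c^{n}\right\}$, realized over $\left\{0,1\right\}$ by using the length-two blocks $00$ and $11$ as markers or separators and single letters as counters. With context length exactly $2$, these blocks are just long enough to let an instruction detect a block boundary, which compensates for the loss of the distinct-letter tagging that the six-letter construction of Theorem~1 relied upon. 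Throughout I would work with the generative (dual) view, writing reductions as insertions via $\dashv$, so that $\mathrm{L}\!\left(M\right)$ is described as the set of words generated from $\epsilon$ by context-respecting insertions in the $\leftarrow^{\star}$ formalism of the Preliminaries.

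First I would write down the finite instruction set $I$ explicitly and, for each $\mathbf{i}=\left(u_{\mathrm{L}},v,u_{\mathrm{R}}\right)\in I$, state its intended role in peeling off one ``layer'' of the three counters while respecting the fixed length-two contexts. I would then prove the easy inclusion: every word $w_{n}$ of the intended canonical shape satisfies $w_{n}\vdash_{M}^{\star}\epsilon$, by giving an explicit reduction sequence that removes the counters layer by layer and checking at each step that the required length-two left and right contexts are present. This part is routine bookkeeping and carries no real difficulty.

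The hard part is the reverse characterization, namely identifying $\mathrm{L}\!\left(M\right)\cap R$ precisely. The obstacle is that a binary alphabet together with context length only $2$ allows instructions to fire in many unintended positions, so I must rule out spurious acceptances. The key device I would introduce is an integer-valued invariant (a weight) $\phi\colon\left\{0,1\right\}^{\star}\to\mathbb{Z}^{d}$ that every instruction of $M$ changes by a fixed, predictable amount; since $\phi\!\left(\epsilon\right)=0$, reducibility to $\epsilon$ constrains the symbol counts of any accepted word. Combined with the syntactic shape imposed by $R$, this pins $\mathrm{L}\!\left(M\right)\cap R$ down to one explicit binary language exhibiting three equal counts. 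Choosing $R$ so that it both isolates the intended layered words and is coarse enough to be regular is the delicate design point, since $R$ must tame the uncontrolled nondeterminism while leaving the counting constraint intact.

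Finally, I would apply the pumping lemma for context-free languages to the explicit language $\mathrm{L}\!\left(M\right)\cap R$, whose three-fold equality of counts cannot be pumped, concluding that it is not context-free. Invoking closure of the context-free class under intersection with regular languages then yields that $\mathrm{L}\!\left(M\right)$ is not context-free, which proves Theorem~\ref{thm: Main on cl-RA}. I expect the main obstacle to be the middle step: finding the right pairing of the invariant $\phi$ and the regular restriction $R$ so that no illegal word survives in $\mathrm{L}\!\left(M\right)\cap R$, since the confluence-style control that is free with a large alphabet must here be recovered purely from two-letter block patterns.
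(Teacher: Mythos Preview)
Your overall architecture coincides with the paper's: construct an explicit $2\cra$ over $\{0,1\}$, intersect with a regular language, show the intersection is infinite but too sparse to be context-free, and invoke closure under regular intersection. Where you diverge is in the target language and the tool for the hard direction, and this is where the proposal has a real gap.

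You aim at a three-equal-counts language and propose an invariant $\phi:\{0,1\}^{\star}\to\mathbb{Z}^{d}$ that each instruction shifts by a fixed vector. But any such $\phi$ is, up to boundary terms, a linear combination of bounded-length factor counts (that is exactly what ``fixed change per clearing step with length-two contexts'' forces). Over a binary alphabet, for words of a shape like $1^{a}\,00\,1^{b}\,00\,1^{c}$ (or any similar block layout), these counts see only aggregate quantities such as $a+b+c$; they cannot separate the three block lengths. So the sentence ``this pins $L(M)\cap R$ down to three equal counts'' is precisely the step that does not go through with an additive invariant, and your proposal gives no alternative mechanism for it. You also plan to characterize $L(M)\cap R$ exactly, which is more than is needed and harder than what the paper does.

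The paper avoids both problems. It does \emph{not} aim at $a^{n}b^{n}c^{n}$; instead it shows that every $w\in L(R_{01})\cap K$ has $|w|=2\cdot 3^{n}$ for some $n$, and separately that $L(R_{01})\cap K$ is infinite. The length constraint comes from a genuinely non-additive invariant: one passes through an auxiliary $1$-context system $R_{\mathrm{uV}}$ via a letter-to-letter map $\varphi$ that translates $R_{01}$-steps into $R_{\mathrm{uV}}$-steps, and on $R_{\mathrm{uV}}$ one defines $\Phi$ by $\Phi(\mathrm{u}w)=1+\Phi(w)$ and $\Phi(\mathrm{V}w)=1+3\Phi(w)$. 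Rule~1 of $R_{\mathrm{uV}}$ then \emph{multiplies} $\Phi$ by~$3$, while the others fix it; this multiplicative behaviour is exactly what your $\mathbb{Z}^{d}$-valued additive invariant cannot replicate, and it is what produces the exponentially sparse length set that the pumping lemma kills. If you want to rescue your plan, the missing idea is a non-linear (multiplicative) potential rather than a Parikh-type weight, together with the realization that you only need sparseness of lengths in $L(M)\cap R$, not a full characterization.
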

\noindent In order to prove Theorem \ref{thm: Main on cl-RA}, we
define two particular rewriting systems:
\begin{enumerate}
\item A $1$-context rewriting system $R_{\mathrm{uV}}=\left(\left\{ \mathrm{u},\mathrm{V}\right\} ,\left\{ \mathrm{u},\mathrm{V}\right\} ,I_{\mathrm{uV}}\right)$.
The set $I_{\mathrm{uV}}$ is listed in Table \ref{tab:The-system uV}. 
\item A $2$-clearing restarting automaton $R_{01}=\left(\left\{ 0,1\right\} ,I_{01}\right)$.
The set $I_{\mathrm{uV}}$ is listed in Table \ref{tab:The-system 01}. 
\end{enumerate}
We write $\rightarrow_{\mathrm{uV}}$ for the rewriting relation of
$R_{\mathrm{uV}}$ and $\dashv_{01}$ for the production relation
of $R_{01}$. 

\begin{flushleft}
\begin{table}[h]
\centering{}%
\begin{minipage}[t]{0.3\columnwidth}%
\begin{center}
\begin{tabular}{|c|r|}
\multicolumn{2}{c}{}\tabularnewline
\hline 
0) & $\left(\cent,\epsilon\rightarrow\mathrm{uu},\$\right)$\tabularnewline
\hline 
1) & $\left(\cent,\mathrm{u}\rightarrow\mathrm{uuV},\epsilon\right)$\tabularnewline
\hline 
2) & $\left(\epsilon,\mathrm{Vu}\rightarrow\mathrm{uuuV},\epsilon\right)$\tabularnewline
\hline 
3) & $\left(\epsilon,\mathrm{Vu}\rightarrow\mathrm{uuuu},\$\right)$\tabularnewline
\hline 
\end{tabular}\caption{The rules $I_{\mathrm{uV}}$\label{tab:The-system uV}}

\par\end{center}%
\end{minipage}\hspace*{\fill}%
\begin{minipage}[t]{0.67\columnwidth}%
\begin{center}
\begin{tabular}{|c|r|r|r|r|}
\cline{2-5} 
\multicolumn{1}{c|}{} & \hspace*{\fill}(a)\hspace*{\fill} & \hspace*{\fill}(b)\hspace*{\fill} & \hspace*{\fill}c)\hspace*{\fill} & \hspace*{\fill}d)\hspace*{\fill}\tabularnewline
\hline 
0) & $\left(\cent,00,\$\right)$ & \hspace*{\fill}-\hspace*{\fill} & \hspace*{\fill}-\hspace*{\fill} & \hspace*{\fill}-\hspace*{\fill}\tabularnewline
\hline 
1) & $\left(\cent,10,00\right)$ & $\left(\cent,00,10\right)$ & \hspace*{\fill}-\hspace*{\fill} & \hspace*{\fill}-\hspace*{\fill}\tabularnewline
\hline 
2) & $\left(01,10,00\right)$ & $\left(00,11,01\right)$ & $\left(11,00,10\right)$ & $\left(10,01,11\right)$\tabularnewline
\hline 
3) & $\left(01,10,0\$\right)$ & $\left(00,11,0\$\right)$ & \hspace*{\fill}-\hspace*{\fill} & \hspace*{\fill}-\hspace*{\fill}\tabularnewline
\hline 
\end{tabular}\caption{The rules $I_{01}$ sorted by types 0 to 3\label{tab:The-system 01}}

\par\end{center}%
\end{minipage}
\end{table}

\par\end{flushleft}

The key feature of the system $R_{\mathrm{uV}}$ is:
\begin{lemma}
\label{lem: only powers in uV cup u-star}Let $w\in L\!\left(R_{\mathrm{uV}}\right)\cap\left\{ \mathrm{u}\right\} ^{\star}$.
Then $\left|w\right|=2\cdot3^{n}$ for some $n\ge0$.
\end{lemma}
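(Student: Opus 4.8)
The plan is to attach to every word $w\in\{\mathrm{u},\mathrm{V}\}^{\star}$ a single integer value $\nu(w)$ that is a genuine invariant of the three ``interior'' rules and is multiplied by $3$ by the one ``creation'' rule, and then to read the claim off this value. I would define $\nu(w)$ by scanning $w$ from right to left, starting from the accumulator $0$ and letting each letter act by an affine map: $\mathrm{u}$ acts by $\phi_{\mathrm{u}}(x)=x+1$ and $\mathrm{V}$ acts by $\phi_{\mathrm{V}}(x)=3x+1$. Equivalently $\nu(w)$ is the value at $0$ of the composition of these maps taken in right-to-left order, so that replacing a factor $v$ by $t$ inside a context leaves $\nu$ unchanged whenever $v$ and $t$ induce the same affine map on the value produced by the suffix lying to their right.

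First I would check the two rules that are allowed in the interior or at the right end. For rule~2, both $\mathrm{Vu}$ and $\mathrm{uuuV}$ induce the map $x\mapsto 3x+4$ (reading right to left: $\mathrm{Vu}$ gives $3(x+1)+1$, while $\mathrm{uuuV}$ gives $(3x+1)+3$), so rule~2 preserves $\nu$ in every context. For rule~3 the rewrite is forced to the right end by its right context $\$$, where the incoming value is $0$; there $\mathrm{Vu}$ yields $3\cdot 1+1=4$ and $\mathrm{uuuu}$ yields $4$, so $\nu$ is again preserved. The additive $+1$ in $\phi_{\mathrm{V}}$ is exactly what makes both equalities hold; with $\phi_{\mathrm{V}}(x)=3x$ rule~3 would change $\nu$.

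Next come the two boundary rules. Rule~1 fires only at the left end (left context $\cent$), turning a leading $\mathrm{u}$ into $\mathrm{uuV}$; if $y$ is the value of the suffix to the right, the value changes from $y+1$ to $3y+3=3(y+1)$, i.e. $\nu$ is tripled. Rule~0 applies only to $\epsilon$ and raises $\nu$ from $0$ to $2$. Consequently, along any derivation $\epsilon\rightarrow_{\mathrm{uV}}^{\star}w$ the value starts at $0$, jumps to $2$ at the unique first step, is multiplied by $3$ at each application of rule~1, and is untouched by rules~2 and~3; hence $\nu(w)=2\cdot 3^{p}$, where $p$ is the number of rule-1 steps used. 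Since $\nu(\mathrm{u}^{N})=N$ for an all-$\mathrm{u}$ word, any $w\in L(R_{\mathrm{uV}})\cap\{\mathrm{u}\}^{\star}$ satisfies $|w|=2\cdot 3^{p}$, which is the lemma.

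I expect the only real obstacle to be discovering this valuation, and in particular the correction term $+1$ in the action of $\mathrm{V}$: it is what simultaneously renders rule~2 value-preserving everywhere and rule~3 value-preserving at the boundary, and it linearises rule~1 into a clean tripling. Once the affine action is fixed, every rule reduces to a one-line computation and all the combinatorial freedom of the derivations -- coexisting occurrences of $\mathrm{V}$, arbitrary interleaving of the rules -- becomes irrelevant, because $\nu$ is invariant except for the controlled tripling. I would only take care with the degenerate case $w=\epsilon$ (where $\nu=0$), which is excluded since every nonempty derivation must open with rule~0.
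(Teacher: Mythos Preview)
Your proposal is correct and coincides with the paper's own argument: your valuation $\nu$ is exactly the paper's function $\Phi$, defined there by the recursion $\Phi(\epsilon)=0$, $\Phi(\mathrm{u}w)=1+\Phi(w)$, $\Phi(\mathrm{V}w)=1+3\Phi(w)$, and the rule-by-rule verification that $\Phi$ is preserved by rules~2 and~3, tripled by rule~1, and sent from $0$ to $2$ by rule~0 is the same computation you carry out. Your affine-map phrasing makes the context-independence of rule~2 a little more explicit, but the two proofs are otherwise identical.
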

The proof is postponed to Section \ref{sub:Proof-of-Theorem shift double defect}.
We also define:
\begin{enumerate}
\item A length-preserving mapping $\varphi:\left\{ 0,1\right\} ^{\star}\rightarrow\left\{ \mathrm{u},\mathrm{V}\right\} ^{\star}$
as $\varphi\!\left(x_{1}\dots x_{n}\right)=\overline{x}_{1}\dots\overline{x}_{n}$,
where
\[
\overline{x}_{k}=\begin{cases}
\mathrm{V} & \mbox{if }1<k<n\mbox{ and }x_{k-1}=x_{k+1}\\
\mathrm{u} & \mbox{otherwise}
\end{cases}
\]
for each $k\in\left\{ 1,\dots,n\right\} $.
\item A regular language $K\subseteq\left\{ 0,1\right\} ^{\star}$:
\begin{eqnarray*}
K & = & \left\{ w\in\left\{ 0,1\right\} ^{\star}\mid w\mbox{ has none of the factors }000,010,101,111\right\} .
\end{eqnarray*}

\end{enumerate}
The following is a trivial property of $\varphi$ and $K$:
\begin{lemma}
\label{lem: defect free eq to u-only}Let $u\in\left\{ 0,1\right\} ^{\star}$.
Then $u\in K$ if and only if $\varphi\!\left(u\right)\in\left\{ \mathrm{u}\right\} ^{\star}$.
\end{lemma}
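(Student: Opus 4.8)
The plan is to unfold both definitions and exhibit a position-by-position correspondence between occurrences of the letter $\mathrm{V}$ in $\varphi\!\left(u\right)$ and occurrences of the four forbidden factors inside $u$. Write $u=x_{1}\cdots x_{n}$.

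First I would record the elementary combinatorial observation that the four forbidden factors $000,010,101,111$ are precisely the length-three binary words $y_{1}y_{2}y_{3}$ with $y_{1}=y_{3}$, the middle letter $y_{2}$ ranging freely over $\left\{ 0,1\right\}$ and thus producing exactly these four cases. Equivalently, $u$ contains a factor from $\left\{ 000,010,101,111\right\}$ if and only if there is an index $k$ with $1<k<n$ and $x_{k-1}=x_{k+1}$, the factor in question being $x_{k-1}x_{k}x_{k+1}$ centered at position $k$.

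Then I would compare this condition with the definition of $\varphi$. By construction the $k$-th letter $\overline{x}_{k}$ of $\varphi\!\left(u\right)$ equals $\mathrm{V}$ exactly when $1<k<n$ and $x_{k-1}=x_{k+1}$, which is literally the same condition with the same range of $k$. Hence $\varphi\!\left(u\right)$ contains at least one $\mathrm{V}$ if and only if $u$ contains at least one forbidden factor. The lemma now follows by contraposition: $\varphi\!\left(u\right)\in\left\{ \mathrm{u}\right\} ^{\star}$ says that $\varphi\!\left(u\right)$ has no occurrence of $\mathrm{V}$, which by the correspondence is equivalent to $u$ having none of the factors $000,010,101,111$, i.e.\ to $u\in K$.

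I do not expect any genuine obstacle here, since the statement is essentially definitional; the only points warranting explicit attention are the boundary convention and the degenerate lengths. The clause $1<k<n$ in the definition of $\varphi$ makes $\overline{x}_{1}$ and $\overline{x}_{n}$ always equal to $\mathrm{u}$, which matches the fact that a length-three factor cannot be centered at the first or last position of $u$. For $n\le2$ the word $u$ has no length-three factor and $\varphi\!\left(u\right)$ contains no $\mathrm{V}$, so both sides of the equivalence hold vacuously, and the argument above specializes correctly.
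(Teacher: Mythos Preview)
Your argument is correct. The paper does not actually supply a proof of this lemma; it simply announces it as ``a trivial property of $\varphi$ and $K$'' and moves on. Your write-up spells out exactly the content behind that remark: the forbidden factors $000,010,101,111$ are precisely the length-three words $y_{1}y_{2}y_{3}$ with $y_{1}=y_{3}$, which is literally the condition $x_{k-1}=x_{k+1}$ for an interior position $k$ in the definition of $\varphi$. Your handling of the boundary positions and the degenerate cases $n\le 2$ is appropriate and matches the definitions.
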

The next lemma expresses how the systems $R_{01}$ and $R_{\mathrm{uV}}$
are related:

\begin{lemma}
\label{lem: if 01 then uV}Let $u,v\in\left\{ 0,1\right\} ^{\star}$.
If $u\dashv_{01}v$, then $\varphi\!\left(u\right)\rightarrow_{\mathrm{uV}}\varphi\!\left(v\right)$.\end{lemma}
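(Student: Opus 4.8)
My plan is to argue by a case analysis over the instruction of $I_{01}$ that realizes the step. Since $\dashv_{01}$ is the reverse of the clearing (deletion) relation, $v$ arises from $u$ by \emph{inserting} a two-letter factor: there is an instruction $\left(u_{\mathrm L},w,u_{\mathrm R}\right)\in I_{01}$ and a factorization $u=\alpha\beta$ such that $u_{\mathrm L}$ is a suffix of $\cent\alpha$, $u_{\mathrm R}$ is a prefix of $\beta\$$, and $v=\alpha w\beta$. The goal is to determine, for each of the nine instructions, exactly how $\varphi$ changes when $w$ is spliced in, and to read off that this change is a single step of $\rightarrow_{\mathrm{uV}}$.

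The one structural fact I would record first is the \emph{locality} of $\varphi$: by definition $\overline{x}_k$ is determined by whether $k$ is an interior position and, in that case, by whether $x_{k-1}=x_{k+1}$. Hence, writing $v=\alpha w\beta$, the only positions whose letter-neighbourhood differs between $u$ and $v$ are the last letter of $u_{\mathrm L}$ (its right neighbour changes), the first letter of $u_{\mathrm R}$ (its left neighbour changes), and the two inserted letters of $w$. Every remaining position --- in particular the first letter of $u_{\mathrm L}$, the last letter of $u_{\mathrm R}$, and all letters outside $u_{\mathrm L}u_{\mathrm R}$ --- keeps both of its neighbours (or remains an endpoint), so its $\varphi$-image is untouched. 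Thus $\varphi(u)$ and $\varphi(v)$ differ only inside a window bounded by two symbols that are \emph{identical} in both words; call them $A$ (image of the first letter of $u_{\mathrm L}$, or the left end) and $B$ (image of the last letter of $u_{\mathrm R}$, or the right end).

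It then remains to evaluate the window. For a representative type-$2$ instruction, say $\left(01,10,00\right)$, the relevant letters of $u$ read $A\,\mathrm{V}\,\mathrm{u}\,B$ (the interior $1$ of $01$ has two equal neighbours, hence $\mathrm{V}$; the first $0$ of $00$ has neighbours $1,0$, hence $\mathrm{u}$), while after inserting $10$ the same window of $v$ reads $A\,\mathrm{u}\,\mathrm{u}\,\mathrm{u}\,\mathrm{V}\,B$; between the fixed symbols $A,B$ this is precisely the replacement $\mathrm{Vu}\to\mathrm{uuuV}$, i.e. rule~$2$ of $R_{\mathrm{uV}}$. The identical bookkeeping for the remaining instructions yields the uniform correspondence: every type-$0$ instruction gives $\epsilon\to\mathrm{uu}$ (only $u=\epsilon$ qualifies), matching rule~$0$; every type-$1$ instruction turns the leading $\mathrm{u}$ into $\mathrm{uuV}$, matching rule~$1$; all four type-$2$ instructions turn an interior $\mathrm{Vu}$ into $\mathrm{uuuV}$, matching rule~$2$; and the two type-$3$ instructions turn a trailing $\mathrm{Vu}$ into $\mathrm{uuuu}$, matching rule~$3$. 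In each case the flanking symbols $A,B$ coincide in $u$ and $v$, so the local replacement is a genuine $\rightarrow_{\mathrm{uV}}$ step, and the one-sided contexts of the $R_{\mathrm{uV}}$ rules ($\cent$ for rules $0$ and $1$; none for rule $2$; $\$$ for rule $3$) are forced by the position of the window.

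The only place demanding care --- and the step I expect to be the main, though not deep, obstacle --- is the boundary-sensitivity of $\varphi$: one must check that the two letters flanking the insertion really do flip their $\varphi$-values as claimed, while the anchoring letters one step further out do not, and that the degenerate short cases (when the window abuts $\cent$ or $\$$, such as $u=\epsilon$ in type $0$, or a word with no letter beyond $u_{\mathrm R}$) are covered by the same analysis, since there the flanking value is forced to $\mathrm{u}$ by the endpoint clause in the definition of $\varphi$. Everything else is a finite verification.
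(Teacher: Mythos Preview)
Your proposal is correct and follows essentially the same approach as the paper: a case analysis over the four rule types of $I_{01}$, computing the $\varphi$-images inside the affected window and checking that the change matches the corresponding rule of $R_{\mathrm{uV}}$. Your explicit locality argument about the anchor symbols $A,B$ is a slightly cleaner packaging of what the paper does by writing out $\varphi(u)[..k]=\varphi(v)[..k]$ and $\varphi(u)[k+3..]=\varphi(v)[k+5..]$ directly, but the substance is identical.
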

\begin{proof}
For $u=v$ the claim is trivial, so we suppose $u\neq v$. Denote
$m=\left|u\right|$. As $u$ can be rewrote to $v$ using a single
rule of $R_{01}$, we can distinguish which of the four kinds of rules
(the rows 0 to 3 of Table \ref{tab:The-system 01}) is used:
\begin{enumerate}
\item [0)]If the rule 0 is used, we have $u=\epsilon$ and $v=00$. Thus
$\varphi\!\left(u\right)=\epsilon$ and $\varphi\!\left(v\right)=\mathrm{uu}$.
\item [1)]If a rule $\left(\cent,z_{1}z_{2},y_{1}y_{2}\right)$ of the
kind 1 is used, we see that $v$ has some of the prefixes $1000,0010$
and so $\varphi\!\left(v\right)$ starts with $\mathrm{uuV}$. Trivially,
$\varphi\!\left(u\right)$ starts with $\mathrm{u}$. Because $u\!\left[1..\right]=v\!\left[3..\right]$,
we have $\varphi\!\left(u\right)\!\left[2..\right]=\varphi\!\left(v\right)\!\left[4..\right]$
and we conclude that applying the rule $\left(\cent,\mathrm{u}\rightarrow\mathrm{uuV},\epsilon\right)$
rewrites $\varphi\!\left(u\right)$ to $\varphi\!\left(v\right)$. 
\item [2)]If a rule $\left(x_{1}x_{2},z_{1}z_{2},y_{1}y_{2}\right)$ of
the kind 2 is used, we have
\begin{eqnarray*}
u\!\left[k..k+3\right] & = & x_{1}x_{2}y_{1}y_{2},\\
v\!\left[k..k+5\right] & = & x_{1}x_{2}z_{1}z_{2}y_{1}y_{2}.
\end{eqnarray*}
for some $k\in\left\{ 1,\dots,m-3\right\} $. As $x_{1}x_{2}y_{1}y_{2}$
equals some of the factors $0100,0001,1110,1011$, we have
\[
\varphi\!\left(u\right)\!\left[k+1..k+2\right]=\mathrm{Vu}.
\]
As $x_{1}x_{2}z_{1}z_{2}y_{1}y_{2}$ equals some of the factors $011000,001101,110010,100111$,
we have
\[
\varphi\!\left(v\right)\!\left[k+1..k+4\right]=\mathrm{uuuV}.
\]
Because $u\!\left[..k+1\right]=v\!\left[..k+1\right]$ and $u\!\left[k+2..\right]=v\!\left[k+4..\right]$,
we have
\begin{eqnarray*}
\varphi\!\left(u\right)\!\left[..k\right] & = & \varphi\!\left(v\right)\!\left[..k\right],\\
\varphi\!\left(u\right)\!\left[k+3..\right] & = & \varphi\!\left(v\right)\!\left[k+5..\right].
\end{eqnarray*}
Now it is clear that the rule $\left(\epsilon,\mathrm{Vu}\rightarrow\mathrm{uuuV},\epsilon\right)$
rewrites $\varphi\!\left(u\right)$ to $\varphi\!\left(v\right)$.
\item [3)]If a rule $\left(x_{1}x_{2},z_{1}z_{2},y\$\right)$ of the kind
3 is used, we have
\begin{eqnarray*}
u\!\left[m-2..m\right] & = & x_{1}x_{2}y,\\
v\!\left[m-2..m+2\right] & = & x_{1}x_{2}z_{1}z_{2}y.
\end{eqnarray*}
As $x_{1}x_{2}y$ equals some of the factors $010,000$, we have
\begin{eqnarray*}
\varphi\!\left(u\right)\!\left[m-1..m\right] & = & \mathrm{Vu}.
\end{eqnarray*}
As $x_{1}x_{2}z_{1}z_{2}y$ equals some of the factors $01100,00110$,
we have
\begin{eqnarray*}
\varphi\!\left(v\right)\!\left[m-1..m+2\right] & = & \mathrm{uuuV}.
\end{eqnarray*}
Because $u\!\left[..m-1\right]=v\!\left[..m-1\right]$, we have
\begin{eqnarray*}
\varphi\!\left(u\right)\!\left[..m-2\right] & = & \varphi\!\left(v\right)\!\left[..m-2\right],
\end{eqnarray*}
Now it is clear that the rule $\left(\epsilon,\mathrm{Vu}\rightarrow\mathrm{uuuu},\$\right)$
rewrites $\varphi\!\left(u\right)$ to $\varphi\!\left(v\right)$.
\end{enumerate}
\end{proof}
\begin{corollary}
If $u\in L\!\left(R_{01}\right)$, then $\varphi\!\left(u\right)\in L\!\left(R_{\mathrm{uV}}\right)$.\end{corollary}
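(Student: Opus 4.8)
The plan is to reduce the statement to a routine induction that threads the single-step correspondence of Lemma~\ref{lem: if 01 then uV} along an entire reduction. First I would unfold the two languages in the only way consistent with the shapes of the rules: membership $u\in L\!\left(R_{01}\right)$ means, by definition, that $u\vdash_{01}^{\star}\epsilon$, and reversing every arrow this is exactly the existence of a generative chain $\epsilon\dashv_{01}^{\star}u$; on the other side, since every rule of $R_{\mathrm{uV}}$ strictly increases length (rule~0 seeds the empty word with $\mathrm{uu}$, and rules~1--3 only insert symbols), the natural reading of $\varphi\!\left(u\right)\in L\!\left(R_{\mathrm{uV}}\right)$ is $\epsilon\rightarrow_{\mathrm{uV}}^{\star}\varphi\!\left(u\right)$. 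So the goal becomes: from $\epsilon\dashv_{01}^{\star}u$ produce $\epsilon\rightarrow_{\mathrm{uV}}^{\star}\varphi\!\left(u\right)$.

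Next I would write the reduction explicitly as a chain $\epsilon=w_{0}\dashv_{01}w_{1}\dashv_{01}\cdots\dashv_{01}w_{k}=u$ and argue by induction on $k$. The base case $k=0$ forces $u=\epsilon$; since $\varphi$ is length-preserving we have $\varphi\!\left(\epsilon\right)=\epsilon$, which lies in $L\!\left(R_{\mathrm{uV}}\right)$ by reflexivity of $\rightarrow_{\mathrm{uV}}^{\star}$. For the inductive step I would apply Lemma~\ref{lem: if 01 then uV} to each single step $w_{i}\dashv_{01}w_{i+1}$ to obtain $\varphi\!\left(w_{i}\right)\rightarrow_{\mathrm{uV}}\varphi\!\left(w_{i+1}\right)$, and then concatenate these single $R_{\mathrm{uV}}$-steps into $\epsilon=\varphi\!\left(w_{0}\right)\rightarrow_{\mathrm{uV}}^{\star}\varphi\!\left(w_{k}\right)=\varphi\!\left(u\right)$. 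This yields $\varphi\!\left(u\right)\in L\!\left(R_{\mathrm{uV}}\right)$, as required.

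The only thing that needs genuine care --- and hence the ``hard part'', such as it is --- is the bookkeeping of directions: the clearing automaton $R_{01}$ is specified by its \emph{reduction} relation $\vdash_{01}$ (factor deletion, hence length-decreasing), whereas Lemma~\ref{lem: if 01 then uV} and the target system $R_{\mathrm{uV}}$ are phrased in the \emph{generative} direction (length-increasing). I would make sure the single-step lemma is applied to arrows pointing the same way it is stated, i.e. to $w_{i}\dashv_{01}w_{i+1}$ with $w_{i}$ the shorter word, so that its conclusion $\varphi\!\left(w_{i}\right)\rightarrow_{\mathrm{uV}}\varphi\!\left(w_{i+1}\right)$ lines up head-to-tail along the chain. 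Beyond this alignment and the observation $\varphi\!\left(\epsilon\right)=\epsilon$, the argument is pure transitivity and requires no new computation: it is just the remark that a map carrying the one-step relation into the one-step relation, and $\epsilon$ to $\epsilon$, automatically carries reflexive-transitive closures into reflexive-transitive closures.
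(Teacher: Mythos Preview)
Your proposal is correct and is exactly the argument the paper gives: the paper's proof reads, in full, ``Follows from the fact that $\varphi(\epsilon)=\epsilon$ and a trivial inductive use of Lemma~\ref{lem: if 01 then uV}.'' Your version merely spells out that induction and the direction bookkeeping explicitly.
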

\begin{proof}
Follows from the fact that $\varphi\!\left(\epsilon\right)=\epsilon$
and a trivial inductive use of Lemma \ref{lem: if 01 then uV}.
\end{proof}
The last part of the proof of Theorem relies of the following lemma,
whose proof is postponed to Section \ref{sub:Proof-of-Theorem shift double defect}:
\begin{lemma}
\label{lem: shift double defect}For each $\alpha,\beta>0$ it holds
that
\[
00\left(1100\right)^{\alpha}1000\left(1100\right)^{\beta}\dashv_{01}00\left(1100\right)^{\alpha+9}1000\left(1100\right)^{\beta-1}.
\]
\end{lemma}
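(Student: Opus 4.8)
The plan is to build an explicit production derivation and to let the simpler system $R_{\mathrm{uV}}$ dictate both its length and its shape. Since each instruction of $R_{01}$ inserts a factor of length exactly two and the two words in the statement differ in length by $32$, the claimed relation must be realised by a chain of exactly sixteen single productions. To find that chain I would first pass to the image under $\varphi$. A direct computation (the only $\mathrm{V}$'s are created at the defect, because $00(1100)^{\ast}$ and $(1100)^{\ast}$ contain none of the factors $000,010,101,111$) gives
\[
\varphi\!\left(00(1100)^{\alpha}1000(1100)^{\beta}\right)=\mathrm{u}^{4\alpha+2}\,\mathrm{VuV}\,\mathrm{u}^{4\beta+1},
\]
and likewise the target maps to $\mathrm{u}^{4\alpha+38}\,\mathrm{VuV}\,\mathrm{u}^{4\beta-3}$. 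By Lemma~\ref{lem: if 01 then uV} every step of the sought $R_{01}$-derivation projects to an $R_{\mathrm{uV}}$-step, so it suffices to exhibit a sixteen-step $R_{\mathrm{uV}}$-derivation between these two images and then lift it.

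For the $R_{\mathrm{uV}}$-part I track the triple $(a,b,c)$ recording the numbers of $\mathrm{u}$'s to the left of, between, and to the right of the two $\mathrm{V}$'s of a word $\mathrm{u}^{a}\mathrm{V}\mathrm{u}^{b}\mathrm{V}\mathrm{u}^{c}$. Rule $2$ of $R_{\mathrm{uV}}$, namely $\mathrm{Vu}\rightarrow\mathrm{uuuV}$, applied to the left $\mathrm{V}$ sends $(a,b,c)\mapsto(a+3,b-1,c)$ and requires $b\ge1$, while applied to the right $\mathrm{V}$ it sends $(a,b,c)\mapsto(a,b+3,c-1)$ and requires $c\ge1$. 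Starting from $(4\alpha+2,1,4\beta+1)$, the schedule consisting of four rounds, each one right-$\mathrm{V}$ application followed by three left-$\mathrm{V}$ applications, uses twelve left- and four right-applications, stays inside the admissible region for every $\beta\ge1$, and ends at $(4\alpha+38,1,4\beta-3)$. This is exactly the image of the target: the increment $+36$ in the first coordinate is the nine extra $1100$-blocks on the left, the decrement $-4$ in the third coordinate is the one consumed block on the right, and $b$ returns to $1$.

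It remains to lift this $R_{\mathrm{uV}}$-derivation to $R_{01}$. Every step above is an application of rule $2$, which by the case analysis in the proof of Lemma~\ref{lem: if 01 then uV} is the image of an instruction of type $2$ of Table~\ref{tab:The-system 01}; these instructions carry neither sentinel, so they apply in arbitrary interior context and the whole construction is uniform in $\alpha$ and valid for all $\beta\ge1$ (the right $\mathrm{V}$ keeps at least one $\mathrm{u}$ to its right throughout, so the word-end instructions of type $3$ are never invoked). Concretely, at each of the sixteen steps I would read off the four-letter neighbourhood of the relevant $\mathrm{Vu}$ in the current $R_{01}$-word, match it against one of the patterns $0100,0001,1110,1011$, and apply the corresponding instruction $(a)$--$(d)$; the complementation symmetry of Table~\ref{tab:The-system 01}, under which $0\leftrightarrow1$ swaps $(a)\leftrightarrow(d)$ and $(b)\leftrightarrow(c)$, roughly halves this matching as the defect walks rightward.

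The main obstacle is precisely this lifting. Because $\varphi$ is not a block isomorphism, matching images does not by itself identify the $R_{01}$-words: the intermediate words are not of the clean form $00(1100)^{\ast}1000(1100)^{\ast}$, so at every step one must carry the actual $0/1$-word, verify that the two-letter left and right contexts demanded by the chosen type-$2$ instruction are genuinely present, and finally check that after sixteen steps the word produced is literally $00(1100)^{\alpha+9}1000(1100)^{\beta-1}$ and not merely something sharing its $\varphi$-image. The $R_{\mathrm{uV}}$-schedule guarantees that a lift exists and fixes all the bookkeeping counts, which is what turns this last verification into a finite, if tedious, check rather than a search.
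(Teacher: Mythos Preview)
Your plan is sound in outline but differs markedly from the paper, which gives no conceptual framework at all: it simply writes out an explicit sixteen-step $\dashv_{01}$-chain and lets the reader check each insertion against Table~\ref{tab:The-system 01}. Your route---project via $\varphi$, solve the problem in $R_{\mathrm{uV}}$ by tracking the triple $(a,b,c)$, then lift---is a genuine alternative and correctly predicts the step count, the two-$\mathrm V$ normal form of all intermediate images, and the $(12,4)$ split between left- and right-$\mathrm V$ moves. It is also true (though you only assert it) that every interior occurrence of $\mathrm{Vu}$ in a $\varphi$-image comes from exactly one of the four patterns $0100,0001,1110,1011$, so each type-$2$ $R_{\mathrm{uV}}$-step lifts uniquely to a type-$2$ $R_{01}$-step; this is a one-line consequence of the definition of $\varphi$ and deserves to be stated.

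The soft spot is the sentence ``it suffices to exhibit a sixteen-step $R_{\mathrm{uV}}$-derivation between these two images and then lift it''. Since $\varphi$ is not injective this does \emph{not} suffice, and you correctly walk it back in your final paragraph. But once you concede that one must carry the actual $0/1$-word through all sixteen steps and check that the endpoint is literally $00(1100)^{\alpha+9}1000(1100)^{\beta-1}$, you are performing exactly the computation the paper performs. Your $R_{\mathrm{uV}}$-scaffolding tells you where to insert and in what order---your $4\times(\text{right},\text{left},\text{left},\text{left})$ schedule is in fact the one the paper's explicit chain realises---but it does not replace the verification. So what your approach buys is a discovery heuristic that turns a search into a guided computation; what the paper presents is only the finished product of that computation.
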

\begin{corollary}
\label{cor:full shift double defect}For each $\beta>0$ it holds
that
\[
001000\left(1100\right)^{\beta}\dashv_{01}00\left(1100\right)^{9\beta}1000.
\]
\end{corollary}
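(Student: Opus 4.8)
The plan is to derive the corollary by iterating Lemma~\ref{lem: shift double defect} exactly $\beta$ times, each application pushing the isolated $1000$-block one step to the right at the cost of replacing one right-hand $1100$-block by nine left-hand ones. Reading the words in the form $00(1100)^\alpha 1000(1100)^\gamma$, one application sends the parameter pair $(\alpha,\gamma)$ to $(\alpha+9,\gamma-1)$; starting from $(0,\beta)$ and applying the lemma $\beta$ times reaches $(9\beta,0)$, which is precisely the right-hand side $00(1100)^{9\beta}1000$.

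To turn this iteration into a clean induction I would first prove the slightly more general claim
\[
00(1100)^\alpha 1000(1100)^\gamma \dashv_{01} 00(1100)^{\alpha+9\gamma}1000
\]
for all $\alpha\ge0$ and $\gamma\ge0$, and then set $\alpha=0$, $\gamma=\beta$ to recover the corollary, using $00(1100)^01000(1100)^\beta=001000(1100)^\beta$. I would induct on $\gamma$. The base case $\gamma=0$ is reflexivity of $\dashv_{01}$. For the step with $\gamma\ge1$, Lemma~\ref{lem: shift double defect} gives
\[
00(1100)^\alpha 1000(1100)^\gamma \dashv_{01} 00(1100)^{\alpha+9}1000(1100)^{\gamma-1},
\]
and the induction hypothesis applied with left-count $\alpha+9$ and right-count $\gamma-1$ gives
\[
00(1100)^{\alpha+9}1000(1100)^{\gamma-1} \dashv_{01} 00(1100)^{(\alpha+9)+9(\gamma-1)}1000 = 00(1100)^{\alpha+9\gamma}1000;
\]
chaining the two production derivations by transitivity of $\dashv_{01}$ then closes the step. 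The identity $\alpha+9+9(\gamma-1)=\alpha+9\gamma$ is the only arithmetic involved, and it matches the target exponent exactly.

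The one point that needs care, and which I expect to be the real obstacle, is the very first application of the lemma: when $\alpha=0$ the word is $001000(1100)^\gamma$, so the leading $00$ is immediately followed by $1000$ rather than by a $1100$-block, whereas Lemma~\ref{lem: shift double defect} is stated for $\alpha>0$. This $\alpha=0$ instance is genuinely invoked for every $\gamma\ge1$ (for $\beta=1$ it is the sole application), so the derivation is legitimate only if the shifting mechanism of Lemma~\ref{lem: shift double defect} still works with no $1100$-block to the right of the initial $00$. I would therefore either check that the lemma's postponed proof already covers $\alpha=0$ (the segment $00(1100)^\alpha$ merely grows and is otherwise inert, so the restriction is plausibly inessential), or supply a short standalone production derivation $001000(1100)^\gamma \dashv_{01} 00(1100)^9 1000(1100)^{\gamma-1}$ and then invoke the general claim with left-count $9$, where every subsequent application of the lemma uses a strictly positive parameter. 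Modulo this boundary verification, the corollary is an immediate induction.
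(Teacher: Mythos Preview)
Your approach matches the paper's exactly: it too rewrites the endpoints as $00(1100)^{0}1000(1100)^{\beta}$ and $00(1100)^{9\beta}1000(1100)^{0}$ and then appeals to an ``easy inductive use'' of Lemma~\ref{lem: shift double defect} without further comment. Your caution about the $\alpha=0$ boundary is well-placed since the lemma is stated only for $\alpha>0$, but inspecting its postponed proof shows that the prefix $00(1100)^{\alpha}$ is inert and merely grows throughout the seventeen rewriting steps, so the restriction is inessential and your first proposed resolution goes through.
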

\begin{proof}
As the left-hand side is equal to $00\left(1100\right)^{0}1000\left(1100\right)^{\beta}$
and the right-hand side is equal to $00\left(1100\right)^{9\beta}1000\left(1100\right)^{0}$,
the claim follows from an easy inductive use of Lemma \ref{lem: shift double defect}.
\end{proof}

\begin{corollary}
\label{cor:L cup K is infinite}The language $L\!\left(R_{01}\right)\cap K$
is infinite.\end{corollary}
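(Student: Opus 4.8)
The plan is to realize infinitely many of the admissible lengths as lengths of words in $L\!\left(R_{01}\right)\cap K$. Recall first that $w\in L\!\left(R_{01}\right)$ exactly when $\epsilon\dashv_{01}^{\star}w$, i.e. when $w$ can be grown from the empty word, and that by Lemma~\ref{lem: defect free eq to u-only} a word lies in $K$ precisely when its $\varphi$-image contains no $\mathrm{V}$. Combining the corollary to Lemma~\ref{lem: if 01 then uV} with Lemmas~\ref{lem: defect free eq to u-only} and \ref{lem: only powers in uV cup u-star} shows that every word of $L\!\left(R_{01}\right)\cap K$ has length $2\cdot 3^{n}$; so the task is really to hit infinitely many of these lengths. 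This points at mirroring, inside $R_{01}$, the way $R_{\mathrm{uV}}$ produces the powers $\mathrm{u}^{2\cdot 3^{n}}\in L\!\left(R_{\mathrm{uV}}\right)$: a leading $\mathrm{u}$ becomes a defect $\mathrm{V}$ (rule~$1$), the defect is shuttled rightward while tripling the length (rule~$2$), and is finally erased at the right end (rule~$3$), so that each sweep multiplies the length by three and begins and ends in $\left\{\mathrm{u}\right\}^{\star}$.

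First I would fix a nontrivial base word and verify it by hand, for instance $\epsilon\dashv_{01}00\dashv_{01}1000\dashv_{01}100110$, which exhibits $100110\in L\!\left(R_{01}\right)$ and which has none of the forbidden factors, so $100110\in L\!\left(R_{01}\right)\cap K$. For the inductive step I would take Corollary~\ref{cor:full shift double defect} as the engine: it transports a double defect of the normal form $00\left(1100\right)^{\alpha}1000\left(1100\right)^{\beta}$ across the word, sending the block count from $\beta$ to $9\beta$ and hence inflating the length roughly ninefold (the binary lift of two successive $R_{\mathrm{uV}}$-triplings, $9=3^{2}$). The idea is to wrap this engine between two short \emph{caps} built from the boundary rules of types $0$, $1$ and $3$: one cap creates the double defect out of a clean word, the other absorbs it again, so that a clean word $C\in L\!\left(R_{01}\right)\cap K$ gets connected by $\dashv_{01}^{\star}$ to a clean word $C'$ whose length is again of the scale $2\cdot 3^{n}$ but strictly larger.

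Chaining these steps from the base word then yields words of strictly increasing length, all reducing to $\epsilon$ and, by construction, all in $K$, which gives the required infinite family. I expect the genuine difficulty to lie entirely in the caps, not in the engine: one must arrange the boundary blocks so that (i) the capped word is exactly of the form $00\left(1100\right)^{\alpha}1000\left(1100\right)^{\beta}$ to which Corollary~\ref{cor:full shift double defect} applies, (ii) each cap step is a legal $R_{01}$ production, and, most delicately, (iii) the word obtained after the defect is absorbed genuinely avoids $000,010,101,111$, i.e. lies in $K$. Point (iii) is the crux, since naively abutting a clean block $\left(1100\right)$ to a cap can manufacture a forbidden factor such as $010$ at the seam; ruling this out forces a case-checked choice of caps, and it is precisely this bookkeeping that the explicit reductions of Lemma~\ref{lem: shift double defect} and Corollary~\ref{cor:full shift double defect} are engineered to make routine.
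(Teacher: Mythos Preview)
Your outline --- create a defect with boundary rules, drive it across with Corollary~\ref{cor:full shift double defect}, then absorb it --- is exactly the paper's strategy, but as written it is not yet a proof: you never actually build the caps, and your chosen base word $100110$ does not feed the engine, which needs input of the exact shape $001000(1100)^{\beta}$. The claim ``by construction, all in $K$'' is therefore not earned, since point~(iii) is precisely what you leave open.

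The paper closes this gap by committing from the start to the family $w_{k}=00(1100)^{m_{k}}$ with $m_{k}=(9^{k}-1)/2$. Every such word is transparently in $K$ (the only length-$3$ factors of $00(1100)^{m}$ are $001,011,110,100$), so point~(iii) disappears. With this normal form the opening cap is just two steps,
\[
00(1100)^{m}\;\dashv_{01}\;1000(1100)^{m}\;\dashv_{01}\;001000(1100)^{m}
\]
via rules 1a and 1b, and the absorbing cap is a fixed six-step tail (spelled out explicitly in the paper) turning $p\cdot 1000$ into $p\cdot(1100)^{4}$ for $p=00(1100)^{9m}$, giving $m\mapsto 9m+4$, i.e.\ $m_{k}\mapsto m_{k+1}$; the base is $m_{0}=0$, where $00\in L\!\left(R_{01}\right)$ by rule~0a. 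The missing ingredient in your sketch is thus this specific normal form $00(1100)^{m}$: it is what makes both caps short, uniform in $m$, and automatically compatible with $K$.
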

\begin{proof}
We show that for each $k\ge0$, 
\begin{eqnarray*}
00\left(1100\right)^{\frac{2\cdot9^{k}-2}{4}} & \in & L\!\left(R_{01}\right).
\end{eqnarray*}
In the case $k=0$ we just check that $00\in L\!\left(R_{01}\right)$.
Next we suppose that the claim holds for a fixed $k\ge0$ and show
that
\[
00\left(1100\right)^{\frac{2\cdot9^{k}-2}{4}}\dashv_{01}00\left(1100\right)^{\frac{2\cdot9^{k+1}-2}{4}}.
\]
Using the rules 1a and 1b we get
\[
00\left(1100\right)^{\frac{2\cdot9^{k}-2}{4}}\dashv_{01}1000\left(1100\right)^{\frac{2\cdot9^{k}-2}{4}}\dashv_{01}001000\left(1100\right)^{\frac{2\cdot9^{k}-2}{4}},
\]
while Corollary \ref{cor:full shift double defect} continues with
\[
001000\left(1100\right)^{\frac{2\cdot9^{k}-2}{4}}\dashv_{01}00\left(1100\right)^{\frac{2\cdot9^{k+1}-18}{4}}1000.
\]
Finally, denoting $p=00\left(1100\right)^{\frac{2\cdot9^{k+1}-18}{4}}$,
using rules 2b, 2a, 2b, 2d, 2c, and 2a respectively we get
\[
\begin{aligned}p1000\dashv_{01}p100\underline{11}0\dashv_{01}p1\underline{10}00110\dashv_{01}p1100\underline{11}0110\dashv_{01}p1100110\underline{01}110\dashv_{01}\\
\dashv_{01}p1100110011\underline{00}10\dashv_{01}p1100110011001\underline{10}0=00\left(1100\right)^{\frac{2\cdot9^{k+1}-2}{4}}.
\end{aligned}
\]

\end{proof}
We conclude the proof of Theorem \ref{thm: Main on cl-RA} by pointing
out that Lemmas \ref{lem: defect free eq to u-only}, \ref{lem: if 01 then uV},
and \ref{lem: only powers in uV cup u-star} say that for each $w\in\left\{ 0,1\right\} ^{\star}$
we have 
\begin{eqnarray*}
w\in L\!\left(R_{01}\right)\cap K & \Rightarrow & \varphi\!\left(w\right)\in L\!\left(R_{\mathrm{uV}}\right)\cap\left\{ \mathrm{u}\right\} ^{\star}\\
 & \Rightarrow & \left(\exists n\ge0\right)\,\left|w\right|=2\cdot3^{n}
\end{eqnarray*}
This, together with the pumping lemma for context-free languages and
the infiniteness of $L\!\left(R_{01}\right)\cap K$, implies that
$L\!\left(R_{01}\right)\cap K$ is not a context-free language. As
the class of context-free languages is closed under intersections
with regular languages, nor $L\!\left(R_{01}\right)$ is context-free.

\subsection{\label{sub:Proof-of-Theorem shift double defect}Proofs of Lemmas
\ref{lem: only powers in uV cup u-star} and \ref{lem: shift double defect}}
\begin{proof}
[of Lemma \ref{lem: only powers in uV cup u-star}]We should prove
that $w\in L\!\left(R_{\mathrm{uV}}\right)\cap\left\{ \mathrm{u}\right\} ^{\star}$
implies $\left|w\right|=2\cdot3^{n}$ for some $n\ge0$. Let $\Phi:\left\{ \mathrm{u},\mathrm{V}\right\} ^{\star}\rightarrow\mathbb{N}$
be defined inductively as follows:
\begin{eqnarray*}
\Phi\!\left(\epsilon\right) & = & 0,\\
\Phi\!\left(\mathrm{u}^{k}w\right) & = & k+\Phi\!\left(w\right),\\
\Phi\!\left(\mathrm{V}w\right) & = & 1+3\cdot\Phi\!\left(w\right)
\end{eqnarray*}
for each $k\ge1$ and $w\in\left\{ \mathrm{u},\mathrm{V}\right\} ^{\star}$.
Observe that we have assigned a unique value of $\Phi$ to each word
from $\left\{ \mathrm{u},\mathrm{V}\right\} ^{\star}$. Next, we describe
effects of the rules of $R_{\mathrm{uV}}$ to the value of $\Phi$.
\begin{enumerate}
\item [0)]The rule $0$ can only rewrite $w_{1}=\epsilon$ to $w_{2}=\mathrm{uu}$.
We have $\Phi\!\left(w_{1}\right)=0$ and $\Phi\!\left(w_{2}\right)=2$.
\item [1)]The rule $1$ rewrites $w_{1}=\mathrm{u}w$ to $w_{2}=\mathrm{uuV}w$
for some $w\in\left\{ \mathrm{u},\mathrm{V}\right\} ^{\star}$. We
have $\Phi\!\left(w_{1}\right)=1+\Phi\!\left(w\right)$ and $\Phi\!\left(w_{2}\right)=3+3\cdot\Phi\!\left(w\right)$.
Thus, $\Phi\!\left(w_{2}\right)=3\cdot\Phi\!\left(w_{1}\right)$.
\item [2)]The rule $2$ rewrites $w_{1}=\overline{w}\mathrm{Vu}w$ to $w_{2}=\overline{w}\mathrm{uuuV}w$
for some $w,\overline{w}\in\left\{ \mathrm{u},\mathrm{V}\right\} ^{\star}$.
We have 
\[
\Phi\!\left(\mathrm{Vu}w\right)=\Phi\!\left(\mathrm{uuuV}w\right)=4+3\cdot\Phi\!\left(w\right).
\]
It follows that $\Phi\!\left(w_{1}\right)=\Phi\!\left(w_{2}\right)$.
\item [3)]The rule $3$ rewrites $w_{1}=\overline{w}\mathrm{Vu}$ to $w_{2}=\overline{w}\mathrm{uuuu}$
for some $\overline{w}\in\left\{ \mathrm{u},\mathrm{V}\right\} ^{\star}$.
We have $\Phi\!\left(\mathrm{Vu}\right)=\Phi\!\left(\mathrm{uuuu}\right)=4$
and thus $\Phi\!\left(w_{1}\right)=\Phi\!\left(w_{2}\right)$.
\end{enumerate}

Together, each $w\in L\!\left(R_{\mathrm{uV}}\right)$ has $\Phi\!\left(w\right)=2\cdot3^{n}$
for some $n\ge0$. As $\Phi\!\left(w\right)=\left|w\right|$ for each
$w\in\left\{ \mathrm{u}\right\} ^{\star}$, the proof is complete.

\end{proof}

\begin{proof}
[of Lemma \ref{lem: shift double defect}]We should prove that
\[
00\left(1100\right)^{\alpha}1000\left(1100\right)^{\beta}\dashv_{01}00\left(1100\right)^{\alpha+9}1000\left(1100\right)^{\beta-1}.
\]
for each $\alpha,\beta>0$. Indeed:
\[
\begin{aligned}00\left(1100\right)^{\alpha}1000\left(1100\right)^{\beta} & \,\,\dashv_{01}\,\,00\left(1100\right)^{\alpha}100\underline{11}0\left(1100\right)^{\beta} & \dashv_{01}\\
00\left(1100\right)^{\alpha}1\underline{10}00110\left(1100\right)^{\beta} & \,\,\dashv_{01}\,\,00\left(1100\right)^{\alpha+1}\underline{11}0110\left(1100\right)^{\beta} & \dashv_{01}\\
00\left(1100\right)^{\alpha+1}110\underline{01}110\left(1100\right)^{\beta} & \,\,\dashv_{01}\,\,00\left(1100\right)^{\alpha+2}1110\underline{01}\left(1100\right)^{\beta} & \dashv_{01}\\
00\left(1100\right)^{\alpha+2}11\underline{00}1001\left(1100\right)^{\beta} & \,\,\dashv_{01}\,\,00\left(1100\right)^{\alpha+3}1\underline{10}001\left(1100\right)^{\beta} & \dashv_{01}\\
00\left(1100\right)^{\alpha+4}\underline{11}01\left(1100\right)^{\beta} & \,\,\dashv_{01}\,\,00\left(1100\right)^{\alpha+4}11011\underline{00}100\left(1100\right)^{\beta-1} & \dashv_{01}\\
00\left(1100\right)^{\alpha+4}110\underline{01}1100100\left(1100\right)^{\beta-1} & \,\,\dashv_{01}\,\,00\left(1100\right)^{\alpha+5}11\underline{00}100100\left(1100\right)^{\beta-1} & \dashv_{01}\\
00\left(1100\right)^{\alpha+6}1\underline{10}00100\left(1100\right)^{\beta-1} & \,\,\dashv_{01}\,\,00\left(1100\right)^{\alpha+7}01\underline{10}00\left(1100\right)^{\beta-1} & \dashv_{01}\\
00\left(1100\right)^{\alpha+7}\underline{11}011000\left(1100\right)^{\beta-1} & \,\,\dashv_{01}\,\,00\left(1100\right)^{\alpha+7}110\underline{01}11000\left(1100\right)^{\beta-1} & \dashv_{01}\\
00\left(1100\right)^{\alpha+8}11\underline{00}1000\left(1100\right)^{\beta-1} & .
\end{aligned}
\]

\end{proof}
\bibliographystyle{C:/Users/Vojta/Desktop/SYNCHRO2/bib/splncs03}
\bibliography{C:/Users/Vojta/Desktop/SYNCHRO2/bib/ruco}

\end{document}